\def\ket#1{| #1 \rangle}
\def\bra#1{\langle #1 |}
\def\kb#1#2{|#1\rangle\!\langle #2 |}
\def\be{\begin{eqnarray}}
\def\ee{\end{eqnarray}}
\def\bee{\begin{eqnarray*}}
\def\eee{\end{eqnarray*}}
\newtheorem{defn}{Definition}
\newtheorem{thm}{Theorem}
\newtheorem{exa}{Example}
\newtheorem{rmk}{Remark}
\newcommand{\C}{{\mathbb C}}
\newcommand{\operp}{$\bigcirc$\kern-.91em{$\perp$}}
\newcommand{\spn}{\operatorname{span}}
\def\be{\begin{eqnarray}}
\def\ee{\end{eqnarray}}
\def\bee{\begin{eqnarray*}}
\def\eee{\end{eqnarray*}}
\def\ot{\otimes}
\begin{document}

\title[One-Way LOCC Indistinguishable Lattice States]{One-Way LOCC Indistinguishable Lattice States via Operator Structures}
\author[D.W.Kribs, C.Mintah, M.Nathanson, R.Pereira]{David W. Kribs$^{1,2}$, Comfort Mintah$^{1}$, Michael Nathanson$^3$, Rajesh Pereira$^{1}$}

\address{$^1$Department of Mathematics \& Statistics, University of Guelph, Guelph, ON, Canada N1G 2W1}
\address{$^2$Institute for Quantum Computing and Department of Physics \& Astronomy, University of Waterloo, Waterloo, ON, Canada N2L 3G1}
\address{$^3$Department of Mathematics and Computer Science, Saint Mary's College of California, Moraga, CA, USA 94556}

\begin{abstract}
Lattice states are a class of quantum states that naturally generalize the fundamental set of Bell states. We apply recent results from quantum error correction and from one-way local operations and classical communication (LOCC) theory, that are built on the structure theory of operator systems and operator algebras, to develop a technique for the construction of relatively small sets of lattice states not distinguishable by one-way LOCC schemes. We also present examples, show the construction extends to generalized Pauli states, and compare the construction to other recent work.
\end{abstract}

\subjclass[2010]{47L90, 46B28, 81P15, 81P45, 81R15}

\keywords{quantum state discrimination, Bell states, lattice states, quantum entanglement, local operations and classical communication, operator system, operator algebra, separating vector.}

\date{March 25, 2020}

\maketitle

\section{Introduction}

A basic problem in quantum information theory is that of identifying a state from a set of known states on a composite quantum system, utilizing only quantum operations local to the individual subsystems \cite{bennett1999quantum,ghosh2004distinguishability,horodecki2003local,chefles2004condition}. Many problems in the subject can be seen as special cases of the so-called local operations and classical communication (LOCC) framework, such as quantum teleportation and data hiding \cite{Teleportation, terhal2001hiding,eggeling2002hiding}. The restricted problem of quantum state discrimination with only one-way LOCC operations, in which local operations are performed sequentially on the different subsystems, has been identified as a subproblem of central importance, with special emphasis placed on identifying small sets of indistinguishable states under the paradigm  \cite{Walgate-2000,Nathanson-2005,fan2004distinguishability,N13,cosentino2013small,yu2012four}.

An important class of quantum states, called {\it lattice states}, are a natural generalization of the fundamental Bell states and have been studied previously in the context of LOCC and Positive Partial Transpose (PPT) measurements; for instance in \cite{cosentino2013positive, cosentino2013small, yu2012four}. 
In this paper, we apply recently established results from one-way LOCC theory and quantum error correction \cite{kribs2016operator,kribs2019qeclocc}, that are based on a structural analysis of certain operator systems and operator algebras which arise in the LOCC framework, to the state discrimination problem for lattice states. Specifically, we develop a technique for the construction of relatively small sets of lattice states that are indistinguishable under one-way LOCC schemes. We also show how the technique can be applied to the related class of generalized Pauli states. Our approach gives added insight into exactly why such states are indistinguishable, in particular that this can be seen from properties of the underlying operator structures.

This paper is organized as follows. The next section contains preliminary details. Our constructions and a number of examples are included in the third section. We conclude in the fourth section by comparing our results to other work in the literature and we comment on the future outlook of the approach.

\section{Preliminaries}

We first describe lattice states, then review the required notions from operator systems and operator algebra theory. 

\strut 


The two-qubit Bell states are the canonical entangled basis of two qubits and are well-studied. Writing the two-qubit standard basis in the usual way ($\ket{ij} = \ket{i} \otimes \ket{j}$), the Bell states can be written as
\bee
\ket{\Phi_0} = \frac{ \ket{00} + \ket{11}}{\sqrt{2}} \qquad \ket{\Phi_1} = \frac{ \ket{01} + \ket{10}}{\sqrt{2}} \\ \ket{\Phi_2} = \frac{ \ket{01} - \ket{10}}{\sqrt{2}} \qquad \ket{\Phi_3} = \frac{ \ket{00} -\ket{11}}{\sqrt{2}} .
\eee
These states are naturally identified with the Pauli matrices by $\ket{\Phi_i} = (I \ot \sigma_i)\ket{\Phi_0}$ and where we write,
\bee
I = \sigma_0 =\left( \begin{matrix}  1 & 0 \cr 0 & 1 \end{matrix}\right) \qquad{ X} = \sigma_1 =\left( \begin{matrix}  0 & 1 \cr 1 & 0 \end{matrix}\right)  \\{Y} = \sigma_2 =\left( \begin{matrix}  0 & -i \cr i & 0 \end{matrix}\right) \qquad{ Z} = \sigma_3 =\left( \begin{matrix}  1 & 0 \cr 0 & -1 \end{matrix}\right) .
\eee
The Bell states generalize to the class of lattice states as follows. 

\begin{defn}
{\rm 
For $n \geq 1$, the class of {\it lattice states} $\mathcal L_n$ are given by $n$-tensors of the Bell states;
\bee
{\mathcal L}_n = \{ \ket{\Phi_i} : i \in \{0,1,2,3\} \}^{\ot n} \subseteq \C^{2^n } \otimes  \C^{2^n} .
\eee
}
\end{defn}

States in ${\mathcal L}_n$ are identified with elements of the set of tensor products of Pauli matrices, ${\mathcal P}_n = \{ \ot_{k = 1}^n \sigma_{i_k} \}$, via an extension of the Bell state identification above.

From a communications perspective, if we consider that, for each Bell state, one party called Alice controls each of the first qubits and another called Bob controls each of the second qubits, then the elements of ${\mathcal L}_n$ are maximally-entangled states between two $d$-dimensional ($n$-qubit) quantum systems, one controlled by each party, with $d = 2^n$. As noted in the introduction, lattice states can be seen as one natural generalization of the Bell states that have been studied previously in the context of LOCC and PPT measurements; for example, a set of four states in ${\mathcal L}_2$ that cannot be distinguished by any local measurements is given in \cite{yu2012four, cosentino2013small}.


\strut

Building on a characterization of one-way LOCC \cite{N13} and properties of operator structures \cite{davidson,pereira,pereira2}, a main result from the recent works \cite{kribs2016operator,kribs2019qeclocc} exhibits a connection between the ability to distinguish a set of quantum states with one-way LOCC on the one hand, and the necessary existence of a so-called `separating vector' for a related operator algebra on the other. This result will be improved upon for the current setting in the next section. Let us briefly review the preliminary notions required to do so.

An {\it operator system} $\mathfrak{S}$ is a subspace of operators on a given Hilbert space that is also self-adjoint (i.e., $A\in \mathfrak{S}$ if and only if $A^*\in \mathfrak{S}$) and contains the identity operator $I$. A (finite-dimensional) C$^*$-algebra $\mathfrak{A}$ is a self-adjoint subspace of operators  that is also closed under multiplication. Such algebras are always unitarily equivalent to a direct sum of full matrix algebras coming with multiplicities for each of the algebras \cite{davidson}; that is, there is a unitary transformation $U$ such that $U \mathfrak{A} U^* = \oplus_k (I_{m_k}\otimes M_{n_k})$ for some (unique) positive integers $m_k, n_k$ and where $M_{n}$ is the set of $n\times n$ complex matrices and $I_m$ is the $m \times m$ identity matrix.

A {\it separating vector} $\ket{\psi}$ for an algebra is characterized by the constraint: $A\ket{\psi} = B\ket{\psi}$ for two operators $A, B$ in the algebra if and only if $A = B$. Put another way,  $\ket{\psi}$ is a separating vector for an algebra $\mathfrak{A}$ if the mapping $A\mapsto A\ket{\psi}$ is injective on $\mathfrak{A}$.  This means that a necessary (but not sufficient) condition for an algebra $\mathfrak{A}$ of operators on a $d$-dimensional Hilbert space to have a separating vector is for $\dim (\mathfrak{A})\le d$, where $\dim (\mathfrak{A})$ is the dimension of $\mathfrak{A}$ as a linear space of operators ($\sum_k n_k^2$ with the form of $\mathfrak{A}$ as above).  A key result from the theory of operator algebras on the subject \cite{pereira2} tells us that $\mathfrak A$, in its unitarily equivalent direct sum form as above, has a separating vector if and only if $m_k \geq n_k$ for all $k$.

\section{Constructions of Relatively Small Sets of One-Way LOCC Indistinguishable States}
\subsection{Lattice States}
Recent work has applied the study of operator systems to the problem of identifying sets of bipartite states that cannot be distinguished using one-way LOCC. The following theorem (Theorem~3 in \cite{kribs2019qeclocc}) was proved as an extension of a result derived in \cite{kribs2016operator}.

\begin{thm}\label{thm1} Let $\{U_i \}$ be a set of operators on $\mathbb{C}^d$ and suppose the operator system $\mathfrak{S} = \spn \{ U_i^* U_j, I  \}$ is closed under multiplication and hence is a $\mathrm{C}^*$-algebra. If we let $\ket{\Phi}$ be a maximally entangled state on $\mathbb{C}^d \ot \C^d$, then the set of states $\mathcal S = \{ ( I\otimes U_i) \ket{\Phi} \}$ on $\mathbb{C}^d \otimes \mathbb{C}^d$ is distinguishable by one-way LOCC if and only if $\mathfrak{S}$ has a separating vector.
\end{thm}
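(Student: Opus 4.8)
\emph{Proof proposal.} The plan is to convert one-way LOCC distinguishability into a statement about how orthogonal projections compress the $\mathrm{C}^*$-algebra $\mathfrak{S}$, and then to read off the answer from the decomposition $\mathfrak{S}\cong\bigoplus_k(I_{m_k}\ot M_{n_k})$ recalled above together with the separating-vector criterion of \cite{pereira2}. Throughout I would invoke the standing assumption of the discrimination problem --- the $U_i$ are unitaries and $\{(I\ot U_i)\ket\Phi\}$ is orthonormal, i.e.\ $\Tr(U_i^*U_j)=d\,\delta_{ij}$, which holds for lattice and generalized Pauli states --- together with the harmless normalizations that $\ket\Phi$ is the standard maximally entangled vector (absorb a local unitary into the $U_i$) and that $I\in\{U_i\}$ (replace $U_i$ by $U_0^{-1}U_i$), so that every $U_i$ now lies in $\mathfrak{S}$.

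\emph{Step 1: an operator criterion.} From the standard description of one-way LOCC discrimination of maximally entangled states \cite{N13} --- Alice performs a POVM $\{E_a\}$, communicates the outcome, and success is possible exactly when for each outcome $a$ Bob's conditional states are perfectly distinguishable (have pairwise orthogonal supports) --- a brief computation with the transpose trick $(A\ot I)\ket\Phi=(I\ot A^T)\ket\Phi$ and $\Tr_A\kb\Phi\Phi=\tfrac1d I$ (up to transpose) shows that, given outcome $a$ and true index $i$, Bob's conditional state is proportional to $U_iE_a^TU_i^*$, with support $U_i\cdot\mathrm{range}(E_a^T)$. As the $U_i$ are invertible, every index survives every nonzero outcome, so $\mathcal{S}$ is one-way distinguishable (say, with Alice measuring first; the reverse order is symmetric) if and only if there is a POVM $\{E_a\}$ on $\C^d$ with $Q_a\,U_i^*U_j\,Q_a=0$ for all $i\neq j$ and all $a$, where $Q_a$ is the projection onto $\mathrm{range}(E_a^T)$.

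\emph{Step 2: a trace condition, and the block structure.} Since $I\in\{U_i\}$ and $\Tr(U_i^*U_j)=0$ for $i\neq j$, the span $V:=\spn\{U_i^*U_j: i\neq j\}$ lies in the traceless part of $\mathfrak{S}$ while $\mathfrak{S}=\C I\oplus V$; hence $V$ \emph{is} the traceless part of $\mathfrak{S}$, and the condition of Step 1 reads: $Q_aAQ_a=0$ for all traceless $A\in\mathfrak{S}$, equivalently (testing $A=I$) $Q_aAQ_a=\tfrac{\Tr A}{d}Q_a$ for all $A\in\mathfrak{S}$. Call a subspace $\mathfrak{S}$-scalar if its projection satisfies this identity. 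Writing $\mathfrak{S}\cong\bigoplus_k(I_{m_k}\ot M_{n_k})$, I would then establish the structural fact that a unit vector $\ket w$ spans an $\mathfrak{S}$-scalar line if and only if, in each summand $k$, the corresponding $m_k\times n_k$ reshaping of $\ket w$ is a scalar multiple of an isometry $\C^{n_k}\to\C^{m_k}$; in particular $\ket w$ has a nonzero component in summand $k$ only if $m_k\ge n_k$. (One also checks that $\ket w$ is $\mathfrak{S}$-scalar iff $\overline{\ket w}$ is, so passing between $E_a$ and $E_a^T$ is harmless.)

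\emph{Step 3: the two implications.} For ``$\Rightarrow$'': refine a distinguishing POVM into rank-one pieces $E_a=\kb{u_a}{u_a}$ (this preserves the $\mathfrak{S}$-scalar property of the supports); then $\sum_a\kb{u_a}{u_a}=I$ forces, for each $k$, some $u_a$ with a nonzero component in summand $k$, whence $m_k\ge n_k$ for all $k$, which by \cite{pereira2} is exactly the statement that $\mathfrak{S}$ has a separating vector. For ``$\Leftarrow$'': assuming $m_k\ge n_k$ for all $k$, exhibit a single $\mathfrak{S}$-scalar unit vector meeting every summand (a suitably normalized isometry $\C^{n_k}\to\C^{m_k}$ in summand $k$), observe that the $\mathfrak{S}$-scalar unit vectors then span $\C^d$, and average the corresponding rank-one projection over the unitary group of the commutant $\mathfrak{S}'$; Schur's lemma yields $\tfrac1d I$, and a Carath\'eodory-type argument discretizes this into a finite POVM of $\mathfrak{S}$-scalar rank-one elements, which by Steps 1--2 distinguishes $\mathcal{S}$. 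The points I expect to need the most care are pinning down the \emph{correct} criterion in Step 1 (it is tempting but wrong to replace ``Bob's conditional states have orthogonal supports'' by ``the global conditional vectors are orthogonal'') and the identification of $V$ with the traceless part of $\mathfrak{S}$ in Step 2, where orthonormality of the state family is essential; once $\mathfrak{S}$-scalar subspaces are in hand both implications are short, the only routine technicalities being the rank-one refinement of POVMs and the averaging/discretization in ``$\Leftarrow$''.
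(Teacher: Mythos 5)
First, a point of reference: the paper itself contains no proof of Theorem~\ref{thm1} --- it is imported as Theorem~3 of \cite{kribs2019qeclocc} --- so there is no in-paper argument to compare against line by line. Judged on its own terms, your proposal is essentially sound and follows the route one would expect from the cited sources: reduce one-way distinguishability to the existence of a POVM for the first party whose (transposed) supports compress $\mathfrak{S}$ to scalars, then read the answer off the decomposition $\mathfrak{S}\cong\bigoplus_k(I_{m_k}\ot M_{n_k})$ together with the $m_k\ge n_k$ separating-vector criterion of \cite{pereira2}. Your Step~1 criterion is the correct one (orthogonal supports of Bob's conditional states $U_iE_a^TU_i^*$, via \cite{N13}), and you are right both that the rank-one refinement is harmless and that the unitarity/orthonormality of the family --- which the statement as quoted suppresses --- is genuinely needed: without $\Tr(U_i^*U_j)=0$ the ``if'' direction is simply false, so flagging it as a standing hypothesis is appropriate.

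Three repairs are needed, none fatal. (i) Your structural fact understates what is true: $\bra{w}A\ket{w}=\Tr(A)/d$ for all $A\in\mathfrak{S}$ forces the reshaped component in summand $k$ to satisfy $W_k^*W_k=\tfrac{m_k}{d}I_{n_k}$ exactly, so an $\mathfrak{S}$-scalar vector has a \emph{nonzero} component in \emph{every} summand; consequently in ``$\Rightarrow$'' a single nonzero POVM element, refined to rank one, already yields $m_k\ge n_k$ for all $k$, and the appeal to $\sum_a\kb{u_a}{u_a}=I$ is unnecessary (as stated, the ``if'' half of your iff is also slightly wrong, since the scalars are forced and cannot vanish). (ii) The parenthetical ``$\ket{w}$ is $\mathfrak{S}$-scalar iff $\overline{\ket{w}}$ is'' is false in general: conjugation turns $\mathfrak{S}$-scalar into $\overline{\mathfrak{S}}$-scalar, and $\overline{\mathfrak{S}}$ is the operator system of the conjugated family, not $\mathfrak{S}$. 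Fortunately you never need it: having written $\tfrac1d I=\sum_t p_t\,U_t\kb{w}{w}U_t^*$ with $U_t$ unitaries in the commutant $\mathfrak{S}'$, just take Alice's POVM to be $E_t:=\bigl(d\,p_t\,U_t\kb{w}{w}U_t^*\bigr)^T$; then $\mathrm{range}(E_t^T)$ is spanned by the $\mathfrak{S}$-scalar vector $U_t\ket{w}$, which is all Step~1 requires. (iii) ``Schur's lemma yields $\tfrac1d I$'' is too quick: averaging over the unitary group of $\mathfrak{S}'$ only places the average in $\mathfrak{S}''=\mathfrak{S}$; you must also invoke the scalar property summand by summand, where averaging over the unitaries of the $\C^{m_k}$ factor produces $\tfrac{1}{m_k}I_{m_k}\ot\rho_k$ with $\rho_k=\tfrac{m_k}{d}I_{n_k}$, whence $\tfrac1d I$ (the claim that $\mathfrak{S}$-scalar unit vectors span $\C^d$ is then not needed at all). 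With these adjustments, your outline is a complete and correct proof, in the same spirit as the argument of \cite{kribs2016operator,kribs2019qeclocc} that the paper cites.
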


Consider how this result might lend itself to potential application to sets of lattice states: the $U_i$ in that case could be taken as elements of ${\mathcal P}_n = \{ \ot_{k = 1}^n \sigma_{i_k} \}$, with its nice multiplicative properties. Indeed, we use this result and further structure of lattice states to construct comparatively small sets of states that cannot be perfectly distinguished with one-way LOCC.

\begin{exa}\label{3qubit example}
{\rm Consider the lattice states ${\mathcal L}_3  =  \{ \ket{\Phi_i} \otimes \ket{\Phi_j} \otimes \ket{\Phi_k}: i,j,k \in \{0,1,2,3\}\} \subseteq \C^{ 8} \otimes  \C^{ 8},$ which lie in bipartite 3-qubit Hilbert space. The elements in ${\mathcal L}_3$ correspond to tensor products of  Bell states $\{ \sigma_i \ot \sigma_j \ot \sigma_k \}$ which form a multiplicative group of order 64. These commute (modulo scalar multiples), and each element has order 2, so the group is isomorphic to $(\mathbb{Z}_2)^6$.

Consider the following set of matrices corresponding to a set of six specific lattice states:
\[
S = \{ I^{\ot 3}, Z \ot I \ot I, I \ot Z \ot I, I \ot I \ot Z , X \ot X \ot X, Y \ot Y \ot Y\}.
\]
Let $\mathfrak{A}$ be the algebra generated by the elements of $S$.  One can check that $\mathfrak{A}$ has dimension 16, which is bigger than $d = 8$, so $\mathfrak{A}$ has no separating vector. The operator system of interest $\mathfrak{S}$ is contained in $\mathfrak{A}$.

For each pair $\{i,j\}$, the product $U_i^*U_j \in \mathfrak{S} \subset \mathfrak{A}$. For $i \ne j$, these pairwise products are all distinct, implying that $\dim \mathfrak{S} = 1 + {|S| \choose 2} = 16$. This means that $\mathfrak{S} = \mathfrak{A}$, and $\mathfrak{S}$ is an algebra that has no separating vector.  Theorem \ref{thm1} tells us that  is not possible to distinguish the corresponding quantum states with one-way LOCC.
}\end{exa}

We generalize this construction to create families of small sets of states that cannot be distinguished with one-way LOCC, allowing us to state the following theorem.

\begin{thm}\label{LOCCLatticeExample}
For every $n > 1$ and $d = 2^n$, there exist sets of $m$ lattice states in $\C^d\ot\C^d$ that are not distinguishable with one-way LOCC, where
\be
m = \left\{ \begin{array}{ll} 2\sqrt{2d}-1 & \mbox{ if  $n$  is odd} \cr 3\sqrt{d}-1 & \mbox{ if  $n$  is even.} \end{array} \right.
\ee
\end{thm}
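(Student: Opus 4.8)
The plan is to recast the statement as a question about the group $\F_2^{2n}$ that indexes $\mathcal{P}_n$, and then to give an efficient construction there. Recall the standard dictionary: each element of $\mathcal{P}_n$ is, up to a phase, a Pauli tensor $P_u$ attached to an index $u\in\F_2^{2n}$; the $P_u$ are self-adjoint, are linearly independent for distinct $u$, and satisfy $P_u P_v = c_{u,v}\,P_{u+v}$ for phases $c_{u,v}$; and, after the usual reorganization of tensor factors, the states in $\mathcal{L}_n$ are exactly the $(I\ot P_u)\ket{\Phi}$ with $u\in\F_2^{2n}$, where $\ket{\Phi}$ is maximally entangled on $\C^d\ot\C^d$. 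Consequently, if $\{U_i\}$ is a family of Pauli tensors whose index set $S\subseteq\F_2^{2n}$ contains $0$ (so that $I$ is among the $U_i$), then the operator system of Theorem~\ref{thm1} satisfies $\mathfrak{S}=\spn\{U_i^*U_j,\,I\}=\spn\{P_w : w\in S+S\}$, where $S+S=\{u+v:u,v\in S\}$; this is closed under multiplication---hence a C$^*$-algebra---precisely when $S+S$ is a subgroup, in which case $\dim\mathfrak{S}=|S+S|$. Since $\dim\mathfrak{S}\le d$ is necessary for $\mathfrak{S}$ to admit a separating vector, Theorem~\ref{thm1} reduces the problem to the following: for each $n>1$, build $S\subseteq\F_2^{2n}$ with $0\in S$, $|S|=m$, and $S+S$ a subgroup of order exceeding $d=2^n$.

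For the construction, observe that $n+1\le 2n$ for $n>1$, so $\F_2^{2n}$ has a subgroup $H$ of dimension $n+1$. Decompose $H=A\oplus B$ as an internal direct sum of subgroups with $\dim A=a:=\lceil(n+1)/2\rceil$ and $\dim B=b:=\lfloor(n+1)/2\rfloor$, and put $S=A\cup B$. Then $0\in S$; moreover $A+A=A$, $B+B=B$, and $A+B=H$ because the sum is direct, so $S+S=H$. In particular $S+S$ is a subgroup of order $2^{n+1}>2^n=d$, and $\mathfrak{S}=\spn\{P_w:w\in H\}$ has dimension $2^{n+1}$. (Here $S+S$ is already the subgroup generated by $S$, so $\mathfrak{S}$ coincides with the full algebra $\mathfrak{A}$ generated by $\{U_i\}$, and no extra ``Sidon-type'' condition as in Example~\ref{3qubit example} is needed.) Finally, $|S|=|A|+|B|-|A\cap B|=2^a+2^b-1$: for $n$ odd, $a=b=(n+1)/2$ gives $|S|=2\cdot 2^{(n+1)/2}-1=2\sqrt{2d}-1$; for $n$ even, $a=n/2+1$ and $b=n/2$ give $|S|=2^{n/2+1}+2^{n/2}-1=3\cdot 2^{n/2}-1=3\sqrt{d}-1$. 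Hence $|S|=m$ in both cases.

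Taking $\{U_i\}$ to be the Pauli tensors indexed by $S$ and $\ket{\Phi}$ maximally entangled on $\C^d\ot\C^d$, the operator system $\mathfrak{S}$ is a C$^*$-algebra with $\dim\mathfrak{S}=2^{n+1}>d$, hence has no separating vector; by Theorem~\ref{thm1} the $m$ lattice states $\{(I\ot U_i)\ket{\Phi}\}$ are not distinguishable by one-way LOCC, which is the assertion.

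The crux is really the setup in the first paragraph: verifying that the translation from lattice states to the index arithmetic on $\F_2^{2n}$ is faithful modulo the physically irrelevant phases, and that ``$S+S$ is a subgroup'' is exactly the multiplicative-closure hypothesis of Theorem~\ref{thm1}. With that in hand the construction is essentially forced---take $S$ to be the union of two complementary subgroups of a rank-$(n+1)$ subgroup, splitting the ranks as evenly as possible so as to minimize $2^a+2^b$---and the stated value of $m$ follows from the elementary count $2^a+2^b-1$.
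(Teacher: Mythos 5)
Your proposal is correct and follows essentially the same route as the paper's proof: take the union $S$ of two subgroups whose pairwise products fill out a group of order $2^{n+1}$, note that the resulting operator system is then the full $2^{n+1}$-dimensional algebra spanned by those Pauli tensors, which exceeds $d$ in dimension and so has no separating vector, and invoke Theorem~\ref{thm1}. The only difference is cosmetic: you phrase the construction additively via indices in $\F_2^{2n}$ (with $S+S=H$), while the paper works multiplicatively with subgroups $G_1, G_2$ of $\mathcal{P}_n$ modulo phases, and the size count $2^a+2^b-1$ and its optimization are identical.
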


\begin{proof} The lattice states $\mathcal L_n$, where $d = 2^n$, are in one-to-one correspondence with $n$-tensor products of Pauli matrices ${\mathcal P}_n = \{ \ot_{k = 1}^n \sigma_{i_k} \}$.  This set has cardinality $4^n$, and under usual multiplication modulo the scalar matrices, it is isomorphic to the direct product group $(\mathbb{Z}_2)^{2n}$, where $\mathbb{Z}_2$ is the additive group $\{ 0,1\}$ modulo 2.

Consider any subset $S_0 \subseteq {\mathcal P}_n$ of size $(n+1)$ such that the algebra $\mathfrak{A}$ generated by the elements of $S_0$ 
has dimension $2^{n+1}$ as a linear space.
Split $S_0$ into two disjoint sets of $S_1$ and $S_2$ of size $k$ and $(n+1-k)$. Let $G_i$  be the multiplicative subgroup of ${\mathcal P}_n$ generated by $S_i$, for $i = 1,2$; and let $G$ be the group $G = G_1G_2 \cong (\mathbb{Z}_2)^{n+1}$. Then $\mathfrak{A}$ is also the algebra generated by the elements of $G$.

Considering our subgroups $G_1$ and $G_2$ as sets, let $S = G_1 \cup G_2$. Every element in $G$ can be written as a product of two elements in $S$, which implies that the smallest operator system containing $\{ U_1U_2: U_1, U_2 \in S\}$ is actually just  $\mathfrak{A}$, and hence we are in a situation in which Theorem~\ref{thm1} applies. Since the dimension of $\mathfrak{A}$ satisfies $\dim(\mathfrak{A}) = 2^{n+1} > d$, the algebra has no separating vector, and thus it follows from Theorem~\ref{thm1} that the lattice states $\{(I\otimes U)\ket{\Phi} : U\in S \}$ are not distinguishable by one-way LOCC.




Finally, we note that the size of $S$ is $|S| = |G_1| + |G_2| -1 = 2^k + 2^{n+1-k} -1$, since they contain no overlap except the identity. We can minimize the size of $S$ when $k = \frac{n+1}{2}$ or $k = \frac{n}{2}$, depending on the parity of $n$, and the result follows. 
\end{proof}

We give a concrete example of this construction.

\begin{exa}\label{General Construction Example}
{\rm For a general example in $\mathcal L_n$, we can set
\bee
S_1 &=&\{ I^{\ot i} \ot Z \ot I^{\ot n-i-1}\}_{i = 0}^{k-1}  \\ S_2 &=&\{ I^{\ot i} \ot Z \ot I^{\ot n-i-1}\}_{i = k}^{n-1}  \cup \{ X^{\ot n}\}.
\eee
It is easy to check that the algebra generated by $S_1$ has dimension $2^k$; the algebra generated by $S_2$ has dimension $2^{n+1-k}$; and the algebra generated by their union has dimension $2^{n+1}$. This gives us our set
\begin{eqnarray*}
S = \left( \{ I ,Z \}^{\ot k}\ot I^{\ot (n-k)} \right) &\cup & \left( I^{\ot k} \ot \{ I ,Z \}^{\ot n-k}  \right) \\ &\cup& \left( { X}^{\ot k} \ot \{ X,Y \}^{\ot n-k}  \right)
\end{eqnarray*}
with $|S| = 2^k + 2^{n-k+1} - 1$, which is minimized when $k = \lfloor \frac{n}{2}+1 \rfloor$ and $|S| \in  \{2\sqrt{2d}-1, 3\sqrt{d}-1\}$.
}
\end{exa}

\begin{rmk}
{\rm It is worth noting that Example~\ref{General Construction Example} is minimal, in the sense that its $m$ states are not distinguishable with one-way LOCC but that we can perfectly distinguish $(m-1)$ of them. If Alice and Bob measure each of their qubits in the eigenbasis of $Y = \sigma_2$, then they can perfectly distinguish the states in $S$ except for $Z^{\otimes k} \otimes I^{\otimes n+1-k}$ and $X^{\otimes k} \otimes Y^{\otimes n+1-k}$, which will give the same outcomes. Thus, removing either state from $S$ gives a set of $(m-1)$ states that can be perfectly distinguished with one-way LOCC.
}
\end{rmk}

\subsection{Generalized Pauli States}
We can also extend the construction to the class of generalized $d \times d$ Pauli matrices, which are given by
\bee X = \sum_{i = 0}^{d-1} \kb{i+1}{i} \quad \mathrm{and} \quad Z =\sum_{i = 0}^{d-1} \omega^i \kb{i}{i}, \eee
where $\omega$ is a primitive $d$th root of unity and $\ket{d}\equiv\ket{0}$ in $X$.

A comparable result from the literature on this class is the main result of \cite{wangetal2016}, which for $d \geq 4$ constructs an orthogonal set of generalized Pauli matrices with $\lceil 3\sqrt{d}\rceil -1$ maximally entangled states in $\mathbb{C}^d \times \mathbb{C}^d$ that is one-way LOCC indistinguishable. The size of this set of maximally-entangled states matches that of our set in Theorem \ref{LOCCLatticeExample} when $d$ is a power of 4, and it seems worth exploring possible connections between them. The construction in \cite{wangetal2016} is purely computational. Our construction below gives additional insight into exactly why such states are not distinguishable; namely, indistinguishability follows from certain identifiable features of underlying operator structures. In the process, we are able to identify a smaller subset of generalized Pauli matrices with the desired property.

\begin{thm}\label{LOCCGeneralizedPauliExample}
For every $d \ge 2$, there exist sets of $m$ generalized Pauli states in $\C^d\ot\C^d$ that are not distinguishable with one-way LOCC, where
\be
m = 4\left\lceil \sqrt{\frac{d}{2}} \right\rceil -1 . 
\ee
\end{thm}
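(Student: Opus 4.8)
The plan is to mimic the lattice-state construction of Theorem~\ref{LOCCLatticeExample}, replacing the qubit Pauli group by the generalized Pauli group on $\C^d$. Recall that the two generalized Pauli matrices $X, Z$ satisfy $ZX = \omega XZ$ and $X^d = Z^d = I$, so the $d^2$ operators $\{ X^a Z^b : 0 \le a,b \le d-1\}$ span $M_d$ and, modulo scalars, form a group isomorphic to $(\mathbb{Z}_d)^2$. The key algebraic fact I would isolate first is that the abelian subalgebra generated by $Z$ (equivalently, by $X$) has dimension exactly $d$, since $I, Z, Z^2, \dots, Z^{d-1}$ are linearly independent (distinct eigenvalues). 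More generally, for a divisor or a chosen integer $r \le d$, the algebra generated by $Z$ together with one extra generalized Pauli element $W$ chosen so that $W$ ``mixes'' the $Z$-eigenspaces will have dimension as large as $d \cdot (\text{something})$; I want to arrange a generating set whose algebra has dimension strictly exceeding $d$ while using as few group elements as possible.

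Concretely, I would set $r = \lceil \sqrt{d/2}\,\rceil$ and take $S_1$ to be (a set of group elements generating) the algebra spanned by $\{I, Z, \dots, Z^{r-1}\}$ — so $|S_1| = r$ as a set of powers of $Z$ — and $S_2$ to be a set of roughly $\lceil d/r \rceil$ powers of $X$ together with possibly one mixed element, chosen so that $G = G_1 G_2$ generates an algebra $\mathfrak{A}$ of dimension at least $r \cdot \lceil d/r\rceil \ge d$, in fact strictly greater than $d$. Then, exactly as in the proof of Theorem~\ref{LOCCLatticeExample}, every element of $G$ is a product of two elements of $S = G_1 \cup G_2$, so the smallest operator system containing $\{U_1 U_2 : U_1, U_2 \in S\}$ is the algebra $\mathfrak{A}$ itself; since $\dim \mathfrak{A} > d$, it has no separating vector, and Theorem~\ref{thm1} forces the states $\{(I \otimes U)\ket{\Phi} : U \in S\}$ to be indistinguishable by one-way LOCC. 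Counting, $|S| = |G_1| + |G_2| - 1$, and the optimization $|G_1| \approx |G_2| \approx 2\lceil\sqrt{d/2}\,\rceil$ should yield $m = 4\lceil \sqrt{d/2}\,\rceil - 1$; I would verify the two factors can each be realized with exactly $2\lceil\sqrt{d/2}\,\rceil$ group elements (e.g. powers of $Z$ up to $Z^{r-1}$ give $r$ elements in $G_1$, and a suitable choice doubles this, or one splits $G$ as a product of two cyclic-ish pieces of comparable size).

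The main obstacle, and the part I expect to require genuine care rather than bookkeeping, is the \emph{dimension count for the algebra generated by a mixed set of generalized Pauli matrices}. Unlike the qubit case — where the Pauli group modulo scalars is $(\mathbb{Z}_2)^{2n}$ and every subgroup's algebra dimension is just its order — here the relevant subgroup of $(\mathbb{Z}_d)^2$ need not have algebra dimension equal to its size: powers of $Z$ generate a $d$-dimensional \emph{abelian} algebra whatever their number, so one must choose the two pieces $S_1, S_2$ so that their combined algebra is genuinely nonabelian and reaches dimension $> d$, while the group $G_1 G_2$ stays small. I would handle this by choosing $G_1 = \{Z^j : 0 \le j < r\}$-generated (but genuinely taking $G_1$ to be generated by a single element $Z$, giving the full $d$-dimensional commutant-diagonal algebra is too big a group — so instead I take $G_1$ generated by $Z^{d/r}$ if $r \mid d$, or an appropriate cyclic subgroup, so that $|G_1| \approx r$ and $\dim(\text{alg}) \approx r$), and symmetrically $G_2$ generated by $X^{d/r'}$; then the algebra generated by the two together has dimension $\approx r r' \ge d$ by a standard argument that $X^a$ and $Z^b$ are ``independent'' in $M_d$. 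Making the divisibility work for all $d \ge 2$ (not just $d$ a perfect square or a power of $4$) is where the ceilings enter, and I would treat the non-divisible case by padding one generator set by a single extra element, which is exactly the source of the ``$-1$'' and the particular rounding in the formula.
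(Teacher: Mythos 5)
Your proposal contains a genuine gap, and it sits exactly where you flagged the need for ``genuine care.'' The plan hinges on finding two \emph{subgroups} $G_1,G_2$ of the generalized Pauli group (mod scalars), each of size about $2\lceil\sqrt{d/2}\,\rceil$, so that the pairwise products of $S=G_1\cup G_2$ span the algebra generated by $G_1G_2$ and Theorem~\ref{thm1} applies. But $\langle Z\rangle\cong\mathbb{Z}_d$ has a subgroup of order $r$ only when $r\mid d$, and for general $d$ --- most starkly $d$ prime --- the only available subgroups have order $1$, $d$, or $d^2$; the smallest workable choice then gives $m=2d-1$, nowhere near $4\lceil\sqrt{d/2}\,\rceil-1$, so the construction cannot prove the theorem for \emph{every} $d\ge 2$. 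Your proposed repair, ``padding one generator set by a single extra element,'' destroys precisely the group structure that made the span of the pairwise products closed under multiplication, and Theorem~\ref{thm1} as stated requires the operator system $\mathfrak{S}=\spn\{U_i^*U_j,I\}$ to be an algebra; once $S_1$ is merely a set of powers of $Z$ rather than a subgroup, $\mathfrak{S}$ is in general not multiplicatively closed and the separating-vector dichotomy no longer applies to it. (Incidentally, the ``$-1$'' in the count comes from the shared identity in $|G_1|+|G_2|-1$, not from padding.)

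The paper resolves exactly this difficulty by a different mechanism. It takes non-subgroup sets $S_1=\{I,X,\dots,X^{k-1}\}$, $S_2=\{I,X^k,\dots,X^{(l-1)k}\}$ and $S_3=S_2Z$ with $kl\ge d$, accepts that the resulting operator system $\mathfrak{S}=\mathfrak{R}+\mathfrak{R}Z+\mathfrak{R}Z^{-1}$ (with $\mathfrak{R}$ the span of all powers of $X$) is \emph{not} an algebra, and instead exhibits inside $\mathfrak{S}$ a copy of $M_2$: the projections $\kb{\varphi_0}{\varphi_0},\kb{\varphi_1}{\varphi_1}$ onto two $X$-eigenvectors lie in $\mathfrak{R}$, and since $Z\ket{\varphi_0}=\ket{\varphi_1}$ the off-diagonal units $\kb{\varphi_0}{\varphi_1},\kb{\varphi_1}{\varphi_0}$ lie in $\mathfrak{R}Z^{\pm1}$. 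That $M_2$ has no separating vector, and the precursor result (Theorem~1 of \cite{kribs2016operator}) only requires an algebra without a separating vector to be \emph{contained} in the operator system; this is the missing ingredient your argument would need. This route works for every $d\ge 2$ with no divisibility constraints, and the count $m=k+2l-1$ with $l=\lceil\sqrt{d/2}\,\rceil$, $k=2l$ gives the stated bound. It is worth noting that where $d$ does have divisors of the right size (e.g.\ $d=12$ with $r=4$, $r'=6$), your subgroup construction is sound and can even beat the theorem's bound, since the algebra spanned by $G_1G_2$ has dimension $rr'>d$; but as a proof of the theorem for all $d$ it does not go through.
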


\begin{proof}
As before, the proof is given by construction. For any fixed $k,l$, we can define the following subsets of the generalized Pauli matrices:
\bee
S_1 &=& \{ I,X,X^2, X^3, \ldots X^{k-1} \} \\ S_2 &=& \{ I,X^k,X^{2k}, X^{3k}, \ldots X^{(l-1)k} \} \\ S_3 &=& S_2Z =  \{ Z,X^kZ,X^{2k}Z, X^{3k}Z, \ldots X^{(l-1)k}Z \}.  \eee
We claim that the set of states $S = S_1 \cup S_2 \cup S_3$ cannot be distinguished with one-way LOCC if $kl \ge d$.

As above, we are interested in the operator system that contains the pairwise products in $S$; as well as any algebra contained in it. We can also define $\mathfrak{R}$ to be the linear span of the products $\{ U_i^*U_j : U_i \in S_1, U_j \in S_2\}$. A little thought shows that if $kl \ge d$, then $\mathfrak{R}$ is the set of linear combinations of powers of $X$, which is the $d$-dimensional algebra of matrices that are diagonal in the eigenbasis of $X$.  In this case,
\bee
\mathfrak{S} &:=& \spn \{ U_i^*U_j : U_i, U_j \in S\} \\&=&  \spn \{ RZ^j: R \in \mathfrak{R}, j \in \{-1,0,1\}\} .
\eee

As stated, $\mathfrak{R}$ is the set of linear combinations of powers of $X$. Let $\ket{\varphi_0}$ and  $\ket{\varphi_1}$ be the eigenvectors of $X$ with eigenvalues $1$ and $\omega$, respectively. Then
\bee
\kb{\varphi_0}{\varphi_0} = \frac{1}{d} \sum_{k = 0}^{d-1} X^k 
\quad \mathrm{and} \quad
 \kb{\varphi_1}{\varphi_1} = \frac{1}{d} \sum_{k = 0}^{d-1} \omega^{-k}X^k  
\eee
both belong to $\mathfrak{R}$.
Noting that $Z\ket{\varphi_0} = \ket{\varphi_1}$, we have $\kb{\varphi_0}{\varphi_1}  \in \mathfrak{R}Z$ and $\kb{\varphi_1}{\varphi_0}  \in \mathfrak{R}Z^*$. This means that 
\bee
{\mathfrak A} = \spn \{ \kb{\varphi_0}{\varphi_0}, \kb{\varphi_1}{\varphi_0}, \kb{\varphi_0}{\varphi_1} ,\kb{\varphi_1}{\varphi_1}  \} \subseteq \mathfrak{S},
\eee 
and ${\mathfrak A}$ is an algebra that is isomorphic to $M_2$, which has no separating vector. Since it is a subset of $\mathfrak{S}$, then using Theorem~1 in \cite{kribs2016operator} (the precursor result to Theorem~1 presented above) tells us that the set $S$ cannot be distinguished with one-way LOCC.


The size of our set $S$ is given by $m = k+2l-1$. The example in  \cite{wangetal2016} used $k = l = \lceil \sqrt{d}\rceil$, giving $m = 3\lceil \sqrt{d}\rceil-1$. We can make $m$ smaller by minimizing the quantity $k + 2l$ subject to $kl \ge d$. The absolute minimum value varies a little since $k$ and $l$ must be integers, but it will always suffice to set $l = \left\lceil \sqrt{\frac{d}{2}} \right\rceil$ and $k = 2l$. This gives the desired value $m = 4\left\lceil \sqrt{\frac{d}{2}} \right\rceil -1$.
\end{proof}

\begin{rmk}
{\rm
We note that when $d = 2^n$ for an odd value of $n$, then $m = 4\sqrt{\frac{2^n}{2}} -1 = 2\sqrt{2d}-1$, matching the size in Theorem \ref{LOCCLatticeExample}.
}
\end{rmk}

\begin{rmk}
{\rm 
A final observation is that in the case that $d$ is even, if we keep $S_1$ and $S_2$ in the example but set $S_3 = S_2Z^{d/2}$, then $\mathfrak{S}$ is a $2d$-dimensional algebra, which automatically has no separating vector and thus $S$ cannot be distinguished with one-way LOCC.}
\end{rmk}

\section{Outlook and Conclusions}
This work adds to the growing body of results and constructions in the subject of LOCC state distinguishability based on the analysis of operator structures, in particular expanding and building on results in \cite{kribs2016operator,kribs2019qeclocc} for the cases of lattice states and generalized Pauli states. We conclude by discussing how our approach compares to other related work in the literature. 

The size of the sets of maximally-entangled states in our examples grow with dimension of the Hilbert space. By contrast, we note that there exist sets of only {\it three} orthogonal maximally-entangled states which are not distinguishable with one-way LOCC in arbitrarily high dimensions \cite{N13, tian2016general}. However, in these constructions, the corresponding matrices were direct sums of generalized Bell states that were out of phase with each other by a constant complex multiple. They lack the algebraic structure of either the lattice states or the generalized Pauli states. It is also true in these cases that we can usually find vectors $\ket{\phi}$ with $\bra{\phi} U_i^*U_j \ket{\phi} = 0$ whenever $i \ne j$; however, we cannot complete it to a measurement. The examples in this paper  are sets for which no such $\ket{\phi}$ exists. 

If, instead of one-way LOCC, we are allowed to use PPT measurements, then our approach is much more powerful and, in fact, any set of $m$ orthogonal maximally-entangled states can be perfectly distinguished with PPT measurements when $m \le \frac{d}{2}+1$ \cite{N13}. This means that the examples we construct here occupy the space between the two paradigms, being distinguishable with PPT measurements but not with one-way LOCC.  It is not immediately apparent whether the states in these examples are distinguishable using full LOCC operations. 

The examples and constructions presented here suggest the possibility of further generalizations and applications that make use of the operator structure approach in the context of LOCC state distinguishability, and we plan to continue these investigations elsewhere.

\strut

{\noindent}{\it Acknowledgements.} D.W.K. was partly supported by NSERC and a University Research Chair at Guelph. C.M. was partly supported by a Mitacs Accelerate internship held with the African Institute for Mathematical Sciences (AIMS). R.P. was partly supported by NSERC. M.N. acknowledges the ongoing support of the Saint Mary's College Office of Faculty Development, and would like to thank Andrew Conner for helpful conversations.

\bibliographystyle{plain}

\bibliography{MNBibfile}

\end{document}